\newenvironment{remark}{{\bf Remark}}
\newenvironment{mat}{\left[\begin{array}{ccccccccccccccc}}{\end{array}\right]}
\newcommand\bcm{\begin{mat}}
\newcommand\ecm{\end{mat}}
\def\O{{\cal O}}
\def\K{{\cal K}}
\def\Ai{{\text{Ai}}}
\def\I{{\rm i}}
\def\RR{{\mathbb R}}
\def\abs#1{\left|{#1}\right|}
\def\dkf{{\mathrm d}}
\def\dx{\dkf x}
\definecolor{DarkGreen}{rgb}{0,.55,0}
\definecolor{Yellow}{rgb}{.55,.55,0}
\def\Ki_#1^#2{{\mathscr K}_{#1}^{#2}}
\def\sopmatrix#1{\begin{pmatrix} #1 \end{pmatrix}}
\def\secref#1{Section~\ref{sec:#1}}
\def\figref#1{Figure~\ref{fig:#1}}
\def\addtab#1={#1\;&=}
\def\meeq#1{\def\ccr{\\\addtab}
 \begin{align*}
 \addtab#1
 \end{align*}
  }
\def\vc#1{\mbox{\boldmath$#1$\unboldmath}}
\def\Tr{{\rm Tr}\,}
\def\Figure[#1]#2\par{
\begin{figure}[tb]
\begin{center}{
\includegraphics{Figures/#1}}
\end{center}
\caption{#2}\label{fig:#1} 
\end{figure}
}
\def\Figurew[#1]<#2>#3\par{
\begin{figure}[tb]
\begin{center}{
\includegraphics[width=#2]{Figures/#1}}
\end{center}
\caption{#3}\label{fig:#1} 
\end{figure}
}
\def\Figuretwow[#1,#2]<#3>#4\par{
\begin{figure}[tb]
\begin{center}{
\includegraphics[width=#3]{Figures/#1}\includegraphics[width=#3]{Figures/#2}}
\end{center}
\caption{#4}\label{fig:#1} 
\end{figure}
}
\def\Figuretwo[#1,#2]#3\par{
	\Figuretwow[#1,#2]<0.48 \hsize>
		#3\par	
}
\def\Figurefour[#1,#2,#3,#4]#5\par{
\begin{figure}[tb]
\begin{center}{
\includegraphics[width=.48\hsize]{Figures/#1}\includegraphics[width=.48\hsize]{Figures/#2}}
\end{center}
\begin{center}{
\includegraphics[width=.48\hsize]{Figures/#3}\includegraphics[width=.48\hsize]{Figures/#4}}
\end{center}
\caption{#5}\label{fig:#1} 
\end{figure}
}
\def\hexnumber#1{\ifcase#1 0\or1\or2\or3\or4\or5\or6\or7\or8\or9\or
 A\or B\or C\or D\or E\or F\fi}
\font\tenib=cmmib10 \textfont\ibfam=\tenib  
\font\sevenib=cmmib7 \scriptfont\ibfam=\sevenib
\font\fiveib=cmmib5 \scriptscriptfont\ibfam=\fiveib
\edef\ibhx{\hexnumber\ibfam}
\def\D{{\rm d}}
\mathchardef\varphiB="0\ibhx27
\mathchardef\psiB="0\ibhx20
\mathchardef\kappaB="0\ibhx14
\mathchardef\sigmaB="0\ibhx1B
\mathchardef\phiB="0\ibhx1E
\mathchardef\xiB="0\ibhx18
\mathchardef\muB="0\ibhx16
\mathchardef\lambdaB="0\ibhx15
\def\pr(#1){\left({#1}\right)}
\def\br[#1]{\left[{#1}\right]}
\def\set#1{\left\{{#1}\right\}}
\def\ip<#1>{\left\langle{#1}\right\rangle}
\def\iip<#1>{\left\langle\!\langle{#1}\right\rangle\!\rangle}
\def\fpr(#1){\!\pr({#1})}
\def\function#1#2{\expandafter\def\csname #1\endcsname(##1){#2\fpr({##1})}
				\expandafter\def\csname #1p\endcsname(##1){#2'\fpr({##1})}
				\expandafter\def\csname #1pp\endcsname(##1){#2''\fpr({##1})}
				\expandafter\def\csname #1pn\endcsname##1(##2){#2^{\pr({##1})}\fpr({##2})}				}
\def\defoperator#1#2{\expandafter\def\csname #1\endcsname[##1]{#2\!\br[{##1}]}}
\def\ffunction#1{\function{#1}{#1}}
\def\O(#1){{\cal O}\!\left(#1\right)}
\def\Oo(#1){{\rm o}\!\left({#1}\right)}
\def\fO(#1){\Oh\fpr({#1})}
\def\dkfn^#1#2{\,{\rm d}^#1 #2}
\def\dkf#1{\,{\rm d}#1}
\def\I{{\rm i}}
\def\E{{\rm e}}
\def\H_#1^#2(#3){H_#1^{(#2)}\fpr({#3})}
\def\J_#1(#2){{\rm J}_#1\!\pr(#2)}
\def\dx{\dkf{x}}
\def\ddxn^#1{{\dkf{}^#1 \over \dkfn^#1{x}}}
\def\norm#1{\left\| #1 \right\|}
\def\seq_#1^#2#3{\set{#3_{#1},\ldots,#3_{#2}}}
\def\abs#1{\left|{#1}\right|}
\def\mapengine#1,#2.{\mapfunction{#1}\ifx\void#2\else\mapengine #2.\fi }
\def\map[#1]{\mapengine #1,\void.}
\def\mapenginesep_#1#2,#3.{\mapfunction{#2}\ifx\void#3\else#1\mapengine #3.\fi }
\def\mapsep_#1[#2]{\mapenginesep_{#1}#2,\void.}
\def\vcbr[#1]{\pr({#1})}
\def\bvect[#1,#2]{
{
\def\dots{\cdots}
\def\mapfunction##1{\ | \  ##1}
	\sopmatrix{
		 \,#1\map[#2]\,
	}
}
}
\def\vect[#1]{
{\def\dots{\ldots}
	\vcbr[{#1}]
}}
\def\vectt[#1]{
{\def\dots{\ldots}
	\vect[{#1}]^{\top}
}}
\def\Vectt[#1]{
{
\def\mapfunction##1{##1 \cr} 
\def\dots{\vdots}
	\sopmatrix{
		\map[#1]
	}
}}
\def\qfor{\quad\hbox{for}\quad}
\def\simlimit_#1{\,\,\,\sim \!\!\!\!\!\!\!\!\!{ \atop \scriptscriptstyle #1 }}
\def\XXint#1#2#3{{\setbox0=\hbox{$#1{#2#3}{\int}$}
     \vcenter{\hbox{$#2#3$}}\kern-.5\wd0}}
\def\rad^#1#2{\,\,{}^#1\!\!\!\!\sqrt{#2}\,}
\def\Figuretwofixed#1#2#3\par{
\Figuretwow{#1}{#2}{0.48 \hsize}{#3}\par
}
\newenvironment{example}{{\bf Example}}
\title{Sampling unitary invariant ensembles}
\author{
Sheehan Olver\thanks{
The
University of Sydney, Australia. (Sheehan.Olver@sydney.edu.au)}
\and
Raj Rao Nadakuditi\thanks{University of Michigan, USA. (rajnrao@umich.edu)}
\and
Thomas Trogdon\thanks{Courant Institute, New York University, USA. (trogdon@cims.nyu.edu)}
} 
\begin{document}
\maketitle

\begin{abstract}

	We develop an algorithm for sampling from the unitary invariant random matrix ensembles.  The algorithm is based on the representation of their eigenvalues as a determinantal point process whose kernel is given in terms of orthogonal polynomials.   Using this algorithm, statistics beyond those  known through analysis are calculable through Monte Carlo simulation.  Unexpected phenomena are observed in the simulations.

\end{abstract}
\section{Introduction}

The   {\it unitary invariant ensembles}  (UIE) are an important class of random matrices which are invariant under conjugation by unitary matrices.  This corresponds to the physical situation where the frame of reference does not affect the underlying statistics.    UIEs are defined as   $n \times n$ random Hermitian matrices
	$$M = \sopmatrix{
		M_{11} & M_{12}^{\rm R} + \I M_{12}^{\rm I} &\cdots & M_{1n}^{\rm R} + \I M_{1n}^{\rm I} \cr
		 M_{12}^{\rm R} - \I M_{12}^{\rm I} & M_{22 }& \cdots & M_{2n}^{\rm R} + \I M_{2n}^{\rm I} \cr
		 \vdots & \ddots & \ddots & \vdots \cr
		 M_{1n}^{\rm R} - \I M_{1n}^{\rm I}  & \cdots &		 M_{(n-1)n}^{\rm R} - \I M_{(n-1)n}^{\rm I} & M_{nn}
	}$$
whose entries are distributed according to a given {\it potential} $Q$, by the rule
	$${1 \over Z_n} \E^{- \Tr Q(M)} \dkf M,$$
where $Z_n$ is the normalization constant and
	$$\dkf M = \prod_{i = 1}^n \dkf M_{ii} \prod_{i < j} (\dkf M_{ij}^{\rm R} \dkf M_{ij}^{\rm I}).$$

Associated with the UIE potential $Q$ is the weight $w(x) = \E^{- Q(x)}$.  We include  weights of the form $x^\alpha \E^{- Q(x)}$, which corresponds to the UIE

	$$\E^{- \Tr \br[ Q(M) -\alpha \log M ]} =  (\E^{ \Tr \log M })^\alpha  \E^{- \Tr Q(M)} = (\det M)^\alpha \E^{- \Tr Q(M)}, $$
using $\det \E^M = \E^{\Tr M}$.  
We also allow $Q$ to depend on $n$, particularly $Q(x) = n V(x)$.

%

	The contribution of this paper is the development of an efficient algorithm for sampling from invariant ensembles.  Sampling the entries directly is prohibitively expensive: it is difficult to sample many random variables that depend on each other in a complicated manner.    Instead we exploit the fact that the eigenvalues of invariant ensembles are described by determinantal point processes whose kernel is written in terms of the associated orthogonal polynomials.  Thus, the task of sampling invariant ensembles is reduced to the following:
\begin{enumerate}
	\item Construct the orthogonal polynomials associated to the weight $w$.
	\item Sample a determinantal point process defined through this sequence of orthogonal polynomials.  
\end{enumerate}
The first task can be accomplished via either Stieljes procedure \cite{GautschiOP}, or using Riemann--Hilbert techniques \cite{TrogdonSOGauss}. For the second task, we adapt a recently developed algorithm \cite{HoughDetSamp,ScardicchioDetSamp}.

	Very recently, an alternative approach for sampling invariant ensembles based on simulating Dyson Brownian motion was developed by Li and Menon \cite{LiMenonDetSamp}.  Both approaches have the same complexity of ${\cal O}(n^3)$ operations to sample an $n \times n$ matrix,  although the method of Li and Melon is likely to be significantly faster.  However, our approach samples the correct distribution to essentially machine precision accuracy, a task that is computationally impractical using numerical simulation of stochastic differential equations.  See Section~\ref{sec:compare} and \figref{KolmogorovSmirnov} for a demonstration and discussion of this.

\bigskip

	Our motivation for sampling invariant ensembles is to extend the class of matrices for which simulations can be performed, to facilitate a deeper understanding of the rich phenomena that random matrices exhibit.  Currently, the ensembles that can easily be sampled are restricted to those that are generated from independent entries.  These include the following classical ensembles:
\begin{itemize}
	 \item  {\it Wigner ensembles}:  Hermitian ensembles with independent entries.
	 \item  {\it Wishart ensembles}:  $X X^\star$ where $X$ is rectangular with independent complex entries.
	 \item {\it Manova ensembles}: $A (A+B)^{-1}$ for $A = a^\star a$, $B = b^\star b$ and $a,b$ rectangular with independent complex entries.     
\end{itemize}	 
		We note that three of the most studied ensembles lie right at the intersection of these ensembles and invariant ensembles:
\begin{itemize}
	\item {\it Gaussian Unitary Ensemble (GUE)}: Wigner ensemble whose  entries are independent complex Gaussian with twice the variance on the diagonal, is also a UIE with  potential $Q(x) = x^2$.
	\item {\it Laguerre Unitary Ensemble (LUE)}: Wishart ensemble where $X$ is rectangular with iid complex Gaussian entries, is also an UIE with weight $x^\alpha \E^{-x}$ on the half line.
	\item {\it Jacobi Unitary Ensemble (JUE)}: Manova ensemble  where $a,b$ are rectangular with iid complex Gaussian entries, is also a UIE with weight $(1-x)^\alpha (1+x)^\beta$ \cite[pp.~1440]{WitteJUEGap}.
\end{itemize}

While  ensembles generated from independent entries are easily sampled --- particularly GUE, LUE and JUE which have banded representations that enable faster computation \cite{DumitriuBetaEnsemble,EdelmanJUECSDecomposition} --- they are  by no means general: under broad conditions \cite{WignerSemicircle,PasturSemicircle},   Wigner ensembles follow the semicircle law --- i.e., the global distribution of eigenvalues tends to a semicircle --- while  GUE is the only invariant ensemble that follows the semicircle law.  Similarly, Wishart ensembles follow the Mar{\v{c}}enko--Pastur law \cite{MarchenkoPastur}.   On the other hand, whereas simulation has been primarily limited to  ensembles generated from independent entries, analysis of invariant ensembles is quite developed (cf.~\cite{DeiftOrthogonalPolynomials,DeiftInvariantEnsembles}).

A particular phenomena studied in depth is {\it universality}, and universality laws for many canonical families of random matrices, in partigular Wigner, Wishart and invariant ensembles,  have been proved in the bulk \cite{DeiftOrthogonalPolynomials,LubinskyBulkUniversality,YauBulkUniversality} and edge \cite{TaoEdgeUniversality,DeiftInvariantEnsembles}.   The techniques of these proofs differ greatly, hinting that universality is applicable to a much wider class of matrices, beyond the reach of current analytical approaches.
Using the results of this paper, it is now possible to guide this conjecture  through simulation.  For example, we include in \secref{Experiments} some experiments that suggest algebraic manipulations of invariant ensembles also follow universality laws.

	Notably, a very special subset of invariant ensembles have been shown to violate the Tracy--Widom universality law; rather, they follow higher-order analogues of the Tracy--Widom law \cite{HigherOrderTracyWidom}.  While this was discovered analytically, using the sampling algorithm we are able to verify it through simulation.   We further show that adding a sufficient number of these degenerate ensembles (precisely, four) causes the standard Tracy--Widom law to reemerge.  This   demonstrates why broadening the possible simulations is important: if one is restricted to ensembles  generated from independent entries, such phenomena may have gone undiscovered.  

\bigskip

The paper is organized as follows:

\begin{description}
	\item{\secref{IEs}:}  We give a brief overview of invariant ensembles and the distribution of their eigenvalues.
	\item{\secref{Experiments}:}  We present experimental results that are achieved using the sampling algorithm,  verifying that special classes of ensembles do indeed follow alternative universality results.   We also compare the sampling accuracy of our approach to that of Li and Menon \cite{LiMenonDetSamp}.
	\item{\secref{OPs}:}  We give background on orthogonal polynomials and recurrence relationships, and describe their approximation by Chebyshev series for use in the algorithm.
	\item{\secref{DetProc}:}  We close our discussion with a brief overview of determinantal point processes, and present the algorithm as specialized for the orthogonal polynomial setting.  We prove that the algorithm does indeed sample unitary invariant ensembles.  
\end{description}

%

\section{Unitary Invariant Ensembles}\label{sec:IEs}

We begin with a brief overview of how the eigenvalues of a UIE are reduced to a determinantal point process.    The eigenvectors of any UIE are simply Haar distributed unitary matrices\footnote{This follows from GUE having Haar distributed eigenvectors \cite[Corollary 2.5.4]{AndersonRMT} and the distribution of the eigenvectors of all UIEs being independent of $V$ \cite[pp.~25]{DeiftInvariantEnsembles}.}.  Upon integrating out the eigenvectors, the 
%
%
	 distribution of the eigenvalues is determined to be \cite[Section 5.4]{DeiftOrthogonalPolynomials}
	\begin{equation}\label{eq:evdist}
		{1 \over \hat Z_n} \prod_{i <j} (\lambda_i - \lambda_j)^2  w(\lambda_1) \cdots w(\lambda_n) \D \lambdaB
	\end{equation}
where $\hat Z_n$ is a normalization constant.

This distribution can be rewritten in terms of the determinant of orthogonal polynomials.  Let $p_0,p_1,\ldots$ be the polynomials orthonormal with respect to
	$w(x) \D x,$
and define
	 $$\phi_k(x) := p_k(x) \sqrt{w(x)},$$
which are orthonormal in ${\rm L}^2(\mathbb R)$.    Define  the kernel
	\begin{equation}\label{eq:kernel}
		{\cal K}_n(x,y) = \sum_{k=0}^{n-1} \phi_k(x)\phi_k(y).
	\end{equation}
then we have \cite[(5.30)]{DeiftOrthogonalPolynomials} 
	$$\hbox{\eqref{eq:evdist}} = {1 \over n!} \det \begin{pmatrix}
	{\cal K}_n(\lambda_1,\lambda_1) & \cdots & 	{\cal K}_n(\lambda_n,\lambda_1) \cr
	\vdots & \ddots & \vdots \cr
	{\cal K}_n(\lambda_1,\lambda_n) & \cdots & 	{\cal K}_n(\lambda_n,\lambda_n) 
	\end{pmatrix} \D\lambdaB.$$
This form of distribution defines a {\it determinantal point process} (see, e.g., \cite[Chapter 11]{RMTHandbook}).

\subsection{Spectral densities and equilibrium measures}

	The spectral density of a random matrix is the distribution of any single eigenvalue, without ordering.  In other words, we integrate out all but one of the eigenvalues in \eqref{eq:evdist}.  The so-called integrating out lemma \cite[pg.~103]{DeiftOrthogonalPolynomials} tells us that the spectral density is precisely
%
	$${\K_n(x,x) \over n} \D x.$$

As $n \rightarrow \infty$, the spectral density approaches the {\it limiting spectral density}.  In the case of UIE with $Q(x) = n V(x)$, the scaling by $n$  causes the limiting spectral density to have compact support. 	The limiting spectral density is then precisely the equilibrium measure of $V$ \cite[Section 6.4]{DeiftOrthogonalPolynomials}:			
				
				\begin{definition}\label{def:EM}
	The {\it equilibrium measure} $\mu$ is the unique minimizer of
	$$\iint\log{ \frac{1}{|x - y|}} \dkf\mu(x) \dkf\mu(y) +  \int  V(x) \dkf\mu(x)$$
among Borel probability measures on $\mathbb R$.
\end{definition}

		
	Some ensembles of interest do not have the scaled form $n V(x)$.  From classical probability, however, we see that multiplication of the random matrix by a constant $\alpha(n)$ induces a scaling of $Q$: if $M$ is an invariant ensemble with potential $Q(x)$, then $\alpha(n)^{-1} M$ is also an invariant ensemble with potential $Q(\alpha(n) x)$.  We can chose $\alpha(n)$ so that $Q(\alpha(n) x)$ has the desired form.

\begin{example}
	The unscaled GUE has potential $Q(x) = x^2$.  Multiplication by ${1 \over \sqrt n}$ gives the new potential $Q(\sqrt{n} x) = n x^2 = n V(x)$, which is of the desired form.
\end{example}

\begin{example}
	Consider a general polynomial potential $Q(x) = q_m x^m + \cdots q_0$.  Multiplying the ensemble by ${1 \over  n^{1/m}}$ gives the new potential
	 $$Q(n^{1/m} x) = n (q_m x^m + n^{-{1 \over m}} q_{m-1} x^{m-1} + \cdots + n^{-1} q_0)   = n V_n(x).$$
 While $V_n$ now depends on $n$, it tends to the $n$ independent monomial $q_m x^m$, and the limiting spectral density is the equilibrium measure\footnote{This follows from the leading order asymptotics of the equilibrium measure of $V_n$ only depending on the leading order of the polynomial \cite[(5.26)]{DeiftWeights1}, and the expression of the finite spectral density as ${\K_n(x,x) \over n} \D x$.} of $q_m x^m$.    If we can sample $M$ from the invariant ensemble with potential $Q(x) = n V_n(x)$, then $n^{1/m} M$ gives a sample from the unscaled $Q(x)$.  
\end{example}

\begin{remark}
 If we restrict our attention to weights supported on compact sets (say, $[-1,1]$), then it is not necessary to induce a scaling.  When $Q$ is entire but not polynomial, there is no explicit scaling, however, a choice of $\alpha(n)$ that causes the equilibrium measure to have compact support for all $n$  can be computed numerically \cite{TrogdonSOGauss}.  
\end{remark}

For the potentials $V$ we consider, the equilibrium measure is supported on a single interval $[a_V,b_V]$.  The measure  can be readily calculated using \cite{SOEquilibriumMeasure}, returning an approximation to the representation in terms of Chebyshev U series:
	$$\dkf \mu(x)  = \psi(x) \dx \qfor  \psi\fpr({b_V+a_V \over 2} + {b_V -a_V \over 2} x) =   {\sqrt{1-x^2} \over 2 \pi} \sum_{k=1}^\infty V_k U_{k-1}(x) \dx,$$
where $V_k$ are the Chebyshev coefficients of $V'$:
	$$V'\fpr({b_V+a_V \over 2} + {b_V -a_V \over 2} x) = \sum_{k=0}^\infty V_k T_k(x).$$

Associated with the equilibrium measure is a scaling constant for the edge universality law.    In the non-degenerate case where the equilibrium measure has a precisely square root singularity at $b_V$, this is \cite{SOTrogdonRMT}
	\begin{equation}\label{eq:edgescalingconstant}
		c_{V} = (b_V - a_V)^{-1/3} \pr( 2 \pi \sum_{k=1}^\infty V_k)^{2/3}.
	\end{equation}

\subsection{Gap statistics and universality}

	

	The gap statistics are local statistics, measuring the probability  that there are no eigenvalues in an interval $\Omega$.  These statistics can be written in terms of a Fredholm determinant \cite[Proposition 4.6.2]{RMTHandbook}: 
	$$\det (I - \K_n |_{L^2(\Omega)}).$$
 The notation  $\K_n |_{L^2(\Omega)}$ refers to the integral operator on $L^2(\Omega)$ with kernel $\K_n$.  A special case of gap statistics is the edge statistic, which is the probability that the largest eigenvalue is greater than $s$; i.e., $\Omega = [s,\infty)$.

\subsection{Universality}\label{sec:Universality}

There are two types of universality laws we discuss: {\it bulk universality} and {\it edge universality}.  The bulk universality law states that the gap statistic of a scaled neighbourhood of $x$ in the support of the limiting spectral measure of a Hermitian random matrix tends (in a appropriate sense) to the sine-kernel law: 
	$$ \det (I - \K_n |_{L^2\br[x + {(-s,s) \over \psi(x) n}]})\rightarrow \rho^{\sin}(s),$$
where $\psi$ is the density of the equilibrium measure and 
	$$\rho^{\sin}(s) = \det(I - {\cal S}|_{L^2(-s,s)})\qfor {\cal S} = {\sin(x-y) \over x - y}.$$
This universality law has been proved under broad conditions for both Wigner ensembles \cite{TaoBulkUniversality,YauBulkUniversality} and UIE \cite{DeiftOrthogonalPolynomials,LubinskyBulkUniversality}.  

Soft edge universality states that the distribution of a scaled largest eigenvalue tends to the Tracy--Widom law:
	$$\det \pr(I - \K_n |_{L^2\br[(b_V +  {s \over c_V n^{2/3}},\infty)]})\rightarrow \rho^{\Ai}(s)$$
 where $b_V$ is again the right endpoint of the equilibrium measure, $c_V$ is the edge scaling constant \eqref{eq:edgescalingconstant} and
 $$  \rho^{\Ai}(s) = \det(I - {\cal A}|_{L^2(s,\infty)})\qfor {\cal A}(x,y) = {\Ai(x) \Ai'(y) - \Ai'(x) \Ai(y) \over x - y}.$$
%
%
 This law only applies if the behaviour at the right endpoint of the limiting spectral density has precisely square root decay.  Since Wigner ensembles satisfy the semicircle law (having precisely square root decay), they broadly satisfy the Tracy--Widom law \cite{TaoEdgeUniversality}.    When the limiting spectral density has a singularity, as can be the case at the left endpoint of  a Wishart ensemble,  hard edge universality applies, which is described in terms of the Bessel kernel \cite[pp.~107]{RMTHandbook}.    

 However, invariant ensembles can have limiting spectral densities that  exhibit faster decay, in which case the largest eigenvalue follows the higher-order Tracy--Widom law  \cite{ClaeysVanlessenHigherOrderTW,HigherOrderTracyWidom}.  Below, \secref{higherorder}, is the first time the higher-order Tracy--Widom law have been observed through Monte Carlo simulations. 
 
	

\section{Experimental results}\label{sec:Experiments}

In this section, we present preliminary experimental results that are now possible using the algorithm for sampling invariant ensembles.  


%
%
%
%
%


\subsection{Quartic ensemble}\label{sec:sampleQuartic}

The quartic ensemble has the potential $V(x) = x^4$.   In the left-hand side of \figref{QuarticDecayEdge}, we see the convergence of the empirical CDF of the shifted and scaled largest eigenvalue
	$c_V (\lambda_{\rm max} - b_V)$
 to the Tracy--Widom law.  In \figref{QuarticBulkConvergence} we plot the difference between the empirical CDF and the Tracy--Widom law, and also compare the empirical  complementary CDF\footnote{The empirical complementary CDF is the probability that a random variable is larger than a value $x$, as measured through Monte Carlo simulation.}  of 
 	$$n \psi(0) \abs{\lambda_{\rm min}}$$
  to the sine kernel law, where $\lambda_{\rm min}$ is the eigenvalue with smallest absolute value and  $\psi$ is the density of the equilibrium measure.

\begin{remark}
	Monte Carlo simulation for the bulk statistics in \figref{QuarticBulkConvergence} and \figref{CoshBulkConvergence} had not fully converged to the true distribution after a million samples, hence these should be taken as a rough estimate.  For invariant ensembles, high accuracy and fast calculation of the true statistic is possible via the methodology of \cite{SOTrogdonRMT}.  
\end{remark}

\Figuretwo[QuarticDecayEdge,HODecayEdge]
	Monte Carlo simulation with $n = 5, 15$ and 25 of the quartic ensemble (left) and the higher-order decay ensemble (right) compared to the standard and higher-order Tracy--Widom distributions, respectively (dashed line).

\Figuretwo[QuarticBulkConvergence,QuarticEdgeConvergence]
	Convergence of Monte Carlo simulation to universality for $w(x) = \E^{-x^4}$, for $n = 25$ (dot-dashed), 50 (dotted), 75 (dashed) and 100 (plain), with a million samples.  Left: difference between the empirical bulk statistic and the sine kernel law ($n = 100$ not pictured as the distribution was not sufficiently resolved).  Right: convergence of the edge statistic to the Tracy--Widom law.

%
%
%

\subsection{Higher-order decay ensemble}\label{sec:higherorder}

A special ensemble has the potential
	$$V(x) = {x^4 \over 20}   - {4 \over 15} x^3 + {x^2 \over 5} + {8 \over 5} x,$$
and the equilibrium measure \cite{HigherOrderTracyWidom}
	$$ \psi(x)\dkf x = {1 \over 10 \pi} (x + 2)^{1/2} (2 - x)^{5/2} \dkf x \qfor -2 < x < 2,$$
which is the dashed line in the left-hand side of \figref{HODecayAddEM}.  The higher-order decay of the equilibrium causes the large eigenvalue statistics to be different: rather than following the Tracy--Widom distribution, they follow the higher-order Tracy--Widom distributions \cite{HigherOrderTracyWidom}.

	In \figref{QuarticDecayEdge}, we compare Monte Carlo simulations with the finite $n$ distributions to the predicted higher-order Tracy--Widom distribution as calculated in \cite{SOClaeysHigherOrderTW}.    We plot the edge statistic in the neighbourhood $[2 + {s \over c_V^{\rm HO} n^{2/7}},\infty)$,  for the scaling constant\footnote{The constant is stated in \cite{ClaeysVanlessenHigherOrderTW} as $6^{1/7}$.  However, the normalization used in \cite{HigherOrderTracyWidom,SOClaeysHigherOrderTW} differ by a factor of 30, giving $  c_{V}^{\rm HO} = (6/30)^{1/7} = 5^{-2/7}$.} $c_{V}^{\rm HO} = 5^{-2/7}$.

	\begin{remark}
		While the Monte Carlo simulations roughly follow the predicted universality law, they differ by a substantial amount due to $n$ not being sufficiently high.  We have verified the accuracy of Monte Carlo simulation by comparing to a high accuracy approximation of the finite $n$ statistic   using a numerical Riemann--Hilbert approach  \cite{SOTrogdonRMT} (not pictured).  Using the Riemann--Hilbert approach, we see that high accuracy agreement of the finite $n$ statistic with the universality law does not occur until $n \approx 100,000$,  beyond the current reach of simulation.  
		\end{remark}

\subsection{Non-varying cosh ensemble}

	Consider the weight
	$$w(x) = \E^{-\cosh x}$$
which does not scale with $n$, hence the existing theory on universality breaks down.    However, we can still perform sampling as we can calculate the recurrence relationship for the associated orthogonal polynomials, using the Riemann--Hilbert approach \cite{TrogdonSOGauss}.   

	With sampling in hand, we verify that the ensemble does indeed follow both bulk and edge universality in \figref{CoshBulkConvergence}.  Here, to get convergence to the universality laws,  we choose the scaling constants associated with the equilibrium measure of ${1 \over n} \cosh x$.  It is important to note that in  this case $c_V$ varies with $n$.  The convergence to the Tracy--Widom law of the edge statistic is surprising: the equilibrium measure should not have a nice limit at the right endpoint.  

\Figuretwo[CoshBulkConvergence,CoshEdgeConvergence]
	Convergence of Monte Carlo simulation to universality for $w(x) = \E^{-\cosh x}$, for $n = 25$ (dot-dashed), 50 (dotted), 75 (dashed) and 100 (plain).  Left: convergence of the bulk statistic to the sine kernel law.  Right: convergence of the edge statistic to the Tracy--Widom law.

\subsection{Addition of invariant ensembles}

	While the statistics of invariant ensembles themselves are already known in detail, both asymptotically \cite{DeiftOrthogonalPolynomials} and numerically \cite{SOTrogdonRMT}, the theory and numerics break down as soon as further manipulations (\emph{e.g.} matrix addition) of invariant ensembles are performed.  The ability to sample invariant ensembles directly means that we can now perform Monte Carlo simulations to help understand such manipulations.  In this section, we investigate the global distribution of the eigenvalues for the addition of invariant ensembles.  
	
	Our first experiment consists of calculating the spectral density of quartic ensemble added to a GUE.  The $n = 1$ is equivalent to classical convolution, while as $n$ increases it approaches the free probability convolution, which has been calculated symbolically in \cite{RaoEdelmanPolynomialMethod} and numerically \cite{OlverRaoNumericalFreeProb}.    In \figref{QuarticGUE1}, we see that the simulated statistics agree in the $n=1$ case, as expected.  Surprisingly, close agreement is seen for $n = 10$ between simulation and the free probability convolution.  This relationship is only known to hold in the large $n$ limit.   
	

\Figurefour[QuarticGUE1,QuarticGUE2,QuarticGUE5,QuarticGUE10]
	Spectral density of the addition of an $n \times n$ Quartic ensemble and GUE for $n = 1, 2, 5$ and 10.  The dotted line in the first figure is a numerical classical additive convolution, and the dotted line in the last figure is the numerically calculated free additive convolution.  Close agreement is seen for $n = 10$ between simulation and the free probability convolution ($n = \infty$).


 In the second example, we sample $H_1,\ldots,H_k$ independent higher-order decay ensembles, as defined in \secref{higherorder}, and investigate the statistics of
 	$H_1 + \cdots  + H_k$.
Free probability tells us the limiting spectral density, which we show in the left-hand side of \figref{HODecayAddEM},  calculated numerically via \cite{OlverRaoNumericalFreeProb} and verified via Monte Carlo simulation (not pictured).  On the right, we plot the empirical edge statistic: the CDF of   $ .3 n^{2/3} (\lambda_{\rm max} - b_k)$, where $b_k$ is the right endpoint of the limiting spectral density.  (The choice of  scaling by $.3$ is arbitrary.)  In the limiting spectral density we see the emergence of a precisely square root singularity at $k = 4$.  This coincides with the edge statistic appearing to follow the standard Tracy--Widom law, rather than the higher-order analogues.

 \Figuretwo[HODecayAddEM,HODecayAddEdge]
 	Statistics of the addition of $k$ higher-order decay ensembles, for $k=1$ (dashed), 2 (dotted), 3 (dot-dashed), 4 (long-dashed) and 5 (plain).  Left: the limiting spectral densities.  Right: Monte Carlo simulation of the edge statistics.

\subsection{The Kolmogorov--Smirnov statistic}\label{sec:compare}

In following the error analysis of \cite{LiMenonDetSamp}, we consider Kolmogorov--Smirnov (KS) statistics.  The empirical distribution function for $m$ samples $\{\lambda^m\}$, each consisting of $n$ eigenvalues, is
\begin{align*}
F_{n,m}(x) = \frac{1}{nm} \sum_{j=1}^m\sum_{k=1}^n \mathbf 1_{\lambda^m_k < x}.
\end{align*}
This should be compared with both the spectral distribution function
\begin{align*}
F_n(x) = \frac{1}{n}\int_{-\infty}^x \mathcal K_n(y,y) dy, 
\end{align*}
and the limiting distribution function $F(x) = \mu((-\infty,x])$ where $\mu$ is the equilibrium measure.  The two relevant KS statistics are
\begin{align*}
E_{n,m} = \sup_{x\in \mathbb R} | F_{n,m}(x) - F_n(x)|, \quad E^\infty_{n,m} = \sup_{x \in \mathbb R}| F_{n,m}(x) - F(x)|.
\end{align*}
We show these statistics for the Quartic ensemble in \figref{KolmogorovSmirnov}.  It should be noted that the data point for $E^\infty_{100,1000}$ (see the $+$ at $n = 100$ in \figref{KolmogorovSmirnov}) appears to lie below the comparable statistic in \cite[Figure 5(b)]{LiMenonDetSamp}.  Furthermore, our errors do not saturate in the same way.  This can be explained by the fact that while truncation errors are a concern for the numerical solution of stochastic differential equations, they are much less significant (at least to $n = 1000$, $m = 5000$) for the method presented here.

\Figure[KolmogorovSmirnov]
     The estimated KS statistics for the Quartic ensemble. $+$: $E^\infty_{n,1000}$ plotted versus $n$, $\bigcirc$: $E_{n,1000}$ plotted versus $n$, $\times$: $E^\infty_{n,5000}$ plotted versus $n$, $\square$: $E_{n,5000}$ plotted verus $n$.  We see an essentially monotonic decay in the statistic as the matrix size, $n$, increases.  As expected, lower errors are seen when comparing with $\mathcal K_n$.

 \section{Chebyshev expansion of weighted orthogonal polynomials}\label{sec:OPs}
 
	The algorithm below will depend on calculating $\K_n(x,y)$, which requires calculating the orthogonal polynomials.  Recall that if $p_k(x)$ are orthonormal with respect to $w(x) \dx$, then they satisfy a symmetric three-term recurrence relationship
	$$\beta_{k-1} p_{k-1}(x) + \alpha_k p_k(x) + \beta_k p_{k+1}(x) = x p_k(x)$$
with recurrence coefficients $\alpha_k$ and $\beta_k$.  We use these recurrence coefficients, along with the constant $\int w(x) \dx$,  to calculate orthogonal polynomials pointwise.

Computation of the recurrence coefficients $\alpha_k$ and $\beta_k$ is required.  These are known in closed form for classical orthogonal polynomials: Hermite, Jacobi and Laguerre polynomials.  Otherwise, the standard approach is the Stieltjes procedure \cite{GautschiOP}, which is essentially the modified Gram--Schmidt method applied to a discretized inner product using a quadrature rule.  Alternatively, one can calculate these via the numerical solution of Riemann--Hilbert problems \cite{TrogdonSOGauss}.  While the latter approach is more complicated, it is an ${\cal O}(n)$ algorithm, which is significantly better complexity than Stieljes procedure for large $n$, see the discussion in \cite{TrogdonSOGauss}.  

	Pointwise evaluation of the orthogonal polynomials is not sufficient on its own: in the algorithm, we need to perform the following further operations on $\vc\phi(x) = \vectt[\phi_0(x),\ldots,\phi_{n-1}(x)]$ where $\phi_k(x) = \sqrt{w(x)} p_k(x)$ :
\begin{enumerate}
	\item Evaluation of $\vc\phi(r)$.
	\item Pointwise evaluation of the indefinite integral of $\vc\phi(x) A \vc\phi(x)$ where $A $ is a constant matrix.
\end{enumerate}
A convenient method of accomplishing these tasks is to initially expand $\phi_k$ into a Chebyshev expansion on an interval $[a,b]$ chosen so that $\phi_k$ is negligible off the interval:
	$$\phi_k\fpr({a + b \over 2} + {b - a \over 2} x) \approx \sum_{j=0}^{m-1} c_{kj} T_j(x).$$
We determine this expansion by adaptively doubling $\tilde m$ until the last eight coefficients of
	$$\phi_0k\fpr({a + b \over 2} + {b - a \over 2} x)  \approx \sum_{j=0}^{\tilde m - 1} c_{0j} T_k(x)$$
are negligible, {\it a l\'a} the {\sc Chebfun} package \cite{chebfun}.  At each test value of $\tilde m$,  $c_{kj}$ are determined by applying the DCT to $ \phi_0\fpr({a + b \over 2} + {b - a \over 2}  \vc x_m)$ for  $m$ Chebyshev points of the first kind:
	$$\vc x_m = \vectt[-1,\cos \pi {\br[{1 - {1 \over m - 1}}]},\dots,\cos \pi {{{1 \over m - 1}}},1].$$
The negligibility of the last coefficients of $\phi_0$ means that it is effectively a polynomial of degree less than $\tilde m$, hence $\phi_1,\ldots,\phi_{n-1}$ are effectively polynomials of degree less than $\tilde m + n $, and $\phi_k(x) \phi_j(x)$ are effectively polynomials of degree less than $m = 2 (\tilde m + n)$.   From the values of $\phi_0(\vc x_m)$, we can successfully calculate the values of $\phi_k(\vc x_m)$ using the recurrence relationship.  

Using the values of $\phi_k$ at the Chebyshev points $\vc x_m$,  $\vc\phi(r)$ can be approximated in ${\cal O}(m)$ operations using the barycentric formula \cite{Bary}.   We can efficiently multiply Chebyshev series at each Chebyshev point, and then the Chebyshev expansion of $\vc\phi(x) A \vc\phi(x)$ can be calculated using the DCT.   Indefinite integration of the resulting Chebyshev expansion is possible in ${\cal O}(m)$ operations \cite[pp.~32--33]{mason2002chebyshev}, and then the resulting Chebyshev expansion  can be evaluated pointwise in ${\cal O}(m)$ operations using Clenshaw's method \cite{ClenshawsMethod}.  



\section{Sampling determinantal point processes}\label{sec:DetProc}

Here we present an algorithm for sampling the determinantal point process associated with the eigenvalues of UIEs.  Determinantal point processes  are point processes whose distribution has  a determinantal representation
	$$\det\br[{\cal K}_n(x_i,x_j)]_{i,j,=1}^n$$
 in terms of a kernel ${\cal K}_n(x,y)$, see \cite[Chapter 11]{RMTHandbook} for a general exposition.  In our case, the kernel $\K_n$ is given in terms of orthogonal polynomials, recall definition \eqref{eq:kernel}.

In  \cite{HoughDetSamp,ScardicchioDetSamp} an approach for sampling determinantal point processes was introduced.  In our concrete setting, the algorithm takes the following form.  

\begin{definition}
For $A \in \RR^{m \times n}$ with row rank $r$ and $m < n$, the function ${\rm null}(A)$ is defined to return an orthogonal matrix $Q \in \RR^{n \times n - r}$ whose columns span the kernel of $A$, so that $A Q = 0$. It will be clear that our results do not depend on the choice of matrix $Q$.
\end{definition}

\begin{algorithm}[H]
	\caption{Sample determinantal processes, adapted from \cite{HoughDetSamp,ScardicchioDetSamp} \label{ChebyshevSample}}
	\begin{algorithmic}
	\State{Input: Chebyshev interpolations of  $\vc\phi = \vectt[\phi_0,\ldots,\phi_{n-1}]$}
	\State{Output: $n$ UIE eigenvalues $\vc r = \vect[r_1,\ldots, r_n]$}
	\State{Initialize $\vc q_n(x) = \vc\phi(x)$}
	\For{ $k = n,\ldots,2$}
	\State{Obtain $r_k$ by sampling the PDF ${\vc q_k(x)^\top \vc q_k(x) \over k}$}
	\State{Let $\vc f_k = \vc q_k(r_k) \in {\mathbb R}^k$}
	\State{Let $Q_k = {\rm null}( \vc f_k^\top)$ (so that $Q_k^\top \vc f_k = 0$, and $Q_k \in \RR^{k\times k - 1}$})
	\State{Let $\vc q_{k-1}(x) = Q_k^\top \vc q_k(x)$ }
	\EndFor
	\State{Obtain $r_1$ by sampling the PDF ${\vc q_1^\top \vc q_1}$}
\end{algorithmic}
\end{algorithm}

We now use this algorithm, combined with calculation of orthogonal polynomials, to sample unitary invariant ensembles.

\begin{algorithm}[H]
	\caption{Sample unitary invariant ensembles}	
	\begin{algorithmic}
	\State{Input: Chebyshev interpolations of  $\vc\phi = \vectt[\phi_0,\ldots,\phi_{n-1}]$}
	\State{Output: $n \times n$ UIE matrix}
	\State{Obtain	$n$ eigenvalues $\vc r$ by calling Algorithm~\ref{ChebyshevSample}}
	\State{Sample unitary matrix $V$ from the Haar distribution.  E.g., orthogonalize a random $n \times n$ complex matrix whose entries are  iid complex Gaussians \cite{Stewart}.}
	\State{Return $V \diag\!\set{\vc r} V^\star$}
	\end{algorithmic}
\end{algorithm}

We must ensure that we are sampling the correct distribution with this algorithm.  Algorithm~\ref{ChebyshevSample} can be interpreted as a construction of mathematical random variables, however, on a computer these are in fact a deterministic sequence of pseudo-random numbers.  In particular, we sample ${\vc q_k^\top(x) \vc q_k(x) \over k}$ using the numerical inverse transform sampling approach of \cite{SOTownsendInverseSampling}: calculate the CDF 
  	$$F(x) = \int_{a}^x {\vc q_k(x)^\top \vc q_k(x) \over k} \dx$$ 
	using indefinite integration of the Chebyshev series representation, sample pseudo-random variable $Y$ uniformly in $[0,1]$, and find $X$ satisfying $F(X) = Y$    using the bisection method, which only requires evaluation of $F$.

We use the following definition  in order to be precise on what is meant by sampling a distribution.  Note that this definition only encodes convergence to the correct distribution; pseudo-randomness is a subtle issue beyond the scope of this paper.

\begin{definition}
A method is said to sample a Borel probability measure $\mu$ on $\mathbb R^n$ if it produces a (non-random) sequence of points $\{\mathbf x_1,\mathbf x_2,\ldots\} \subset \mathbb R^n$ such that
\begin{align}\label{sampling}
\lim_{n\rightarrow \infty}\frac{\#\{j: \mathbf x_j\in B\}}{n} = \lim_{n\rightarrow \infty} \int_{B} \left( \frac{1}{n} \sum_{j=1}^n \delta_{\mathbf x_j}(\mathbf x) \right) \D \mathbf x = \mu(B),
\end{align}
for all rectangles $B = [a_1,b_1] \times \cdots \times [a_n,b_n] \subset \mathbb R^n$.
\end{definition}

\begin{remark}
From a stochastic process point of view, we want the sequence $\{\mathbf x_1,\mathbf x_2,\ldots\}$ to behave like a generic sample path of $\{\mathbf X_1(\omega), \mathbf X_2(\omega),\ldots\}$ where $\mathbf X_i$ are iid vector-valued random variables with joint distribution $\mu$. From the Glivenko--Cantelli theorem \cite[pg.~76]{Durrett2010}, \eqref{sampling} is a necessary consequence of this.  Furthermore, if we can sample with a given method, then we can, in principle, completely determine the distribution $\mu$, even if we do not have an expression for $\mu$.
\end{remark}

If $Y$ samples the uniform distribution on $[0,1]$, then $X$ will sample a distribution that will be  on the order of machine precision ($\approx 2.22\times 10^{-16}$) when compared with the true distribution.  In what follows, we ignore this small error and treat our sampling procedure as exact.

We aim to show that $r_1,\ldots,r_n$  sample the probability measure
\begin{align*}
P_n(x_1,\ldots,x_n) \D \mathbf x = \frac{1}{\hat Z_n} e^{-\sum_{i=1}^n Q(x_i)} \prod_{i<j} (x_i-x_j)^2 \D \mathbf x,
\end{align*}
and we use the representation
\begin{align*}
P_n(x_1,\ldots,x_n) = \frac{1}{n!} \det (\mathcal K_{n}(x_i,x_j))_{1\leq i,j \leq n}
\end{align*}
Instead of sampling an $n$-dimensional distribution, as mentioned above, we reduce the complexity by exploiting the determinantal representation.  Indeed, the sampling is reduced to sampling a sequence of $n$ one-dimensional distributions.  

%
%

We now show that Algorithm~\ref{ChebyshevSample} samples $P_n(x_1,\ldots,x_n)\D\mathbf x$.  As a first step, we show that the above algorithm samples
\begin{align*}
p(x_1,\ldots,x_n)\D \mathbf x = \prod_{j=1}^n \frac{ \vc q_j(x_j)^\top \vc q_j(x_j)}{j}\D \mathbf x.
\end{align*}
This follows from the following lemma.

\begin{lemma}
Assume a probability measure $\mu$ on $ U \subset \mathbb R^n$ has the form
\begin{align*}
\D \mu(x_1,\ldots,x_n) &= F(x_1,\ldots,x_n)\D\mathbf x, \\
&= f_1(x_1)f_2(x_1,x_2)\cdots f_n(x_1,\ldots,x_n)\D\mathbf x, ~~ F > 0,
\end{align*}
where $\int_{\mathbb R^{m-1}}f_m(x_1,\ldots,x_m)\dx_1\cdots \dx_{m}$ defines a probability measure on $\mathbb R$ (for the $x_m$ variable).  If the sequence $\mathbf X_i = (X^i_1,\ldots,X^i_n)$, $i =1,2,\ldots$ samples the standard uniform distribution on $[0,1]^n$ then the sequence, $\mathbf x_i= (x^i_1,\ldots,x^i_n)$, $i=1,2,\ldots$ uniquely defined by the relation
\begin{align}\label{transform}
\begin{split}
X^i_1 &= \int_{-\infty}^{x^i_1} f_1(x_1') \dx_1',\\
X^i_2 &= \int_{-\infty}^{x^i_2} f_2(x^i_1,x_2') \dx_2',\\
\vdots\\
X^i_n &= \int_{-\infty}^{x^i_n} f_n(x^i_1,\ldots,x^i_{n-1},x_n') \dx_n',
\end{split}
\end{align}
samples $\mu$.
\end{lemma}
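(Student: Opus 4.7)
The plan is to interpret the defining relations \eqref{transform} as a triangular change of variables $T:U\to (0,1)^n$, $(x_1,\ldots,x_n)\mapsto (X_1,\ldots,X_n)$, and to show that this $T$ pushes the standard uniform measure on $[0,1]^n$ onto $\mu$. First, I would verify that $T$ is a continuous bijection with continuous inverse: for each fixed $(x_1,\ldots,x_{m-1})$ the map $x_m\mapsto X_m$ is the CDF of the density $f_m(x_1,\ldots,x_{m-1},\cdot)$, and since $F>0$ forces $f_m>0$, this CDF is continuous and strictly increasing onto $(0,1)$. Solving \eqref{transform} sequentially in $x_1,x_2,\ldots,x_n$ therefore recovers $(x_1,\ldots,x_n)$ uniquely from $(X_1,\ldots,X_n)$ and also shows that $T^{-1}$ is continuous.

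Next I would carry out the standard change-of-variables computation. Because $X_m$ depends only on $x_1,\ldots,x_m$, the Jacobian $DT$ is lower triangular with diagonal entries $\partial X_m/\partial x_m = f_m(x_1,\ldots,x_m)$, so $|\det DT(\mathbf x)|=F(\mathbf x)$. Hence, for every Borel $B\subset U$,
$$\mu(B) \;=\; \int_B F(\mathbf x)\,\dkf\mathbf x \;=\; \int_{T(B)}\dkf\mathbf X \;=\; \lambda(T(B)),$$
where $\lambda$ denotes Lebesgue measure on $(0,1)^n$; equivalently, $\mu=(T^{-1})_*\lambda$.

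The remaining step, which I expect to be the main obstacle, is converting this measure-theoretic identity into the sequential statement \eqref{sampling} for $\mathbf x_i := T^{-1}(\mathbf X_i)$. A rectangle $B\subset\mathbb R^n$ satisfies $\mathbf x_i\in B$ iff $\mathbf X_i\in T(B\cap U)$, but $T(B\cap U)$ is typically not a rectangle, so the sampling hypothesis for $\{\mathbf X_i\}$ cannot be invoked directly. I would overcome this by first upgrading the rectangle-based hypothesis to weak convergence: closed rectangles form a $\pi$-system of $\lambda$-continuity sets generating the Borel $\sigma$-algebra of $[0,1]^n$, so a standard Portmanteau argument shows that the empirical measures $\mu_N^X := \frac1N\sum_{i=1}^N\delta_{\mathbf X_i}$ converge weakly to $\lambda$. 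Applying the continuous mapping theorem with the continuous map $T^{-1}$ then yields $(T^{-1})_*\mu_N^X \Rightarrow (T^{-1})_*\lambda = \mu$, and this pushforward is precisely the empirical measure of $\{\mathbf x_i\}$. Finally, because $\mu$ is absolutely continuous with respect to Lebesgue measure, every rectangle in $\mathbb R^n$ has $\mu$-null boundary and is therefore a $\mu$-continuity set, so weak convergence delivers \eqref{sampling} for $\{\mathbf x_i\}$ and concludes the proof.
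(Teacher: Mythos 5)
Your proof is correct and rests on the same core computation as the paper's: the transformation \eqref{transform} is triangular, its Jacobian determinant is $F$, and the change of variables carries the uniform sampling property over to $\mu$. Where you genuinely differ is in the transfer step. The paper simply asserts that the empirical frequencies of $\{\mathbf X_i\}$ converge to Lebesgue measure \emph{for every Borel set} $S\subset[0,1]^n$ and then takes preimages; but the sampling hypothesis \eqref{sampling} only guarantees this for rectangles, and the blanket Borel assertion is in fact false for a deterministic sequence (take $S=\{\mathbf X_1,\mathbf X_2,\ldots\}$ itself: the empirical frequency is $1$ while $\lambda(S)=0$). You correctly flag that $T(B)$ need not be a rectangle and repair the argument by upgrading the rectangle hypothesis to weak convergence of the empirical measures, pushing forward through the continuous map $T^{-1}$, and descending back to rectangles via the observation that absolute continuity of $\mu$ makes every rectangle a $\mu$-continuity set. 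This is a more careful route than the published one, at the modest cost of invoking the Portmanteau and continuous mapping theorems. One small point you should still make explicit: the strict positivity $f_m>0$, and hence the bijectivity and continuity of $T^{-1}$, needs to be checked on the sections of $U$, which may not be all of $\mathbb R^n$; the paper is no more careful about this, so it is not a defect relative to the original.
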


\begin{proof}
It is clear that the determinant of the Jacobian of the transformation is just $F(x_1^i,\ldots,x_n^i)$ and therefore the mapping $\mathbf x_i \mapsto \mathbf X_i$ is invertible.  We know that for any Borel set $S \subset [0,1]^n$ 
\begin{align*}
\lim_{n\rightarrow \infty}\frac{\#\{j:j=1,\ldots,n ~\text{ and }~ \mathbf X^j \in S\}}{n} = \int_S \D X_1\cdots \D X_n.
\end{align*}
A simple calculation using the determinant of the Jacobian shows that
\begin{align*}
\lim_{n\rightarrow \infty}\frac{\#\{j:j=1,\ldots,n ~\text{ and }~ \mathbf x^j \in T\}}{n} = \int_T F(x_1,\ldots,x_n) \dx_1\cdots \dx_n,
\end{align*}
where $T$ is the inverse image of $S$ under the transformation \eqref{transform}.  Therefore, for any Borel set $T$ let $S$ be the image of $T$ under \eqref{transform} and this shows that the method samples $\mu$.

\end{proof}

Next, we show that $p(x_1,\ldots,x_n) = P_n(x_1,\ldots,x_n)$.   Define
	$$\psi_j^{(k)}(x) = \vc q_k(x_j)^\top \vc q_k(x),$$
so that $\psi_k^{(k)}(x) = \vc f_k^\top \vc q_k(x)$.  
By orthogonality we have
	$$\ip<\psi_j^{(k)},\psi_\ell^{(k)}> = \vc q_k(x_j)^\top \vc q_k(x_\ell).$$
therefore
	\meeq{
		P_n(x_1,\ldots,x_n) = {1 \over n!}  A \qfor \ccr
		 A = \det \begin{mat} \langle \psi_1^{(1)},\psi_1^{(1)} \rangle &  \langle \psi_1^{(1)},\psi_2^{(1)} \rangle & \langle \psi_1^{(1)},\psi_3^{(1)} \rangle & \cdots & \langle \psi_1^{(1)},\psi_n^{(1)} \rangle\\
\langle \psi_2^{(1)},\psi_1^{(1)} \rangle & \langle \psi_2^{(1)},\psi_2^{(1)} \rangle & \cdots &\cdots& \vdots\\
\vdots & \vdots & \ddots & \ddots& \vdots\\
\langle \psi_n^{(1)},\psi_1^{(1)} \rangle & \langle \psi_n^{(1)},\psi_2^{(1)} \rangle & \cdots & \cdots & \langle \psi_n^{(1)}, \psi_n^{(1)} \rangle
\end{mat}.
}

We wish to rewrite $\psi_j^{(1)}$ in terms of $\psi_j^{(k)}(x) = \vc q_k(x_j) \vc q_k(x)$.  First observe that prepending the vector $\vc f_k/\norm{\vc f_k}$ to $Q_k$  we obtain a $k \times k$ orthogonal matrix
	$$\tilde Q_k = \br[{\vc f_k \over \norm{\vc f_k}}, Q_k].$$
  Thus we have  (observing that $\vc f_1 = \vc\phi(x_1)$)
	\meeq{
		\psi_j^{(1)}(x) = \vc\phi(x_j)^\top \tilde Q_1 \tilde Q_1^\top \vc q_1(x) = {\vc \phi(x_j)^\top \vc f_1 \vc f_1^\top \vc q_1(x) \over \norm{\vc f_1}^2} + \vc\phi(x_j)^\top Q_1 Q_1^\top \vc q_1(x) \ccr
			=  c_j \psi_1^{(1)}(x) + \vc q_2(x_j)^\top \vc q_2(x) = c_j \psi_1^{(1)}(x) + \psi_j^{(2)}(x)
		}
By column reductions we get
	$$ A = \det \begin{mat} \langle \psi_1^{(1)},\psi_1^{(1)} \rangle &  \langle \psi_1^{(1)},\psi_2^{(2)} \rangle & \langle \psi_1^{(1)},\psi_3^{(2)} \rangle & \cdots & \langle \psi_1^{(1)},\psi_n^{(2)} \rangle\\
\langle \psi_2^{(1)},\psi_1^{(1)} \rangle & \langle \psi_2^{(1)},\psi_2^{(2)} \rangle & \cdots &\cdots& \vdots\\
\vdots & \vdots & \ddots & \ddots& \vdots\\
\langle \psi_n^{(1)},\psi_1^{(1)} \rangle & \langle \psi_n^{(1)},\psi_2^{(2)} \rangle & \cdots & \cdots & \langle \psi_n^{(1)}, \psi_n^{(2)} \rangle
\end{mat}$$
Similarly, we have
	\meeq{
		\psi_j^{(k)}(x) 
			= c_j \psi_k^{(k)}(x) + \psi_j^{(k+1)}(x), 
		}
and thus we can further reduce to
	$$ A = \det \begin{mat} \langle \psi_1^{(1)},\psi_1^{(1)} \rangle &  \langle \psi_1^{(1)},\psi_2^{(2)} \rangle & \langle \psi_1^{(1)},\psi_3^{(3)} \rangle & \cdots & \langle \psi_1^{(1)},\psi_n^{(n)} \rangle\\
\langle \psi_2^{(1)},\psi_1^{(2)} \rangle & \langle \psi_2^{(1)},\psi_2^{(2)} \rangle & \cdots &\cdots& \vdots\\
\vdots & \vdots & \ddots & \ddots& \vdots\\
\langle \psi_n^{(1)},\psi_1^{(1)} \rangle & \langle \psi_n^{(1)},\psi_2^{(2)} \rangle & \cdots & \cdots & \langle \psi_n^{(1)}, \psi_n^{(n)} \rangle
\end{mat}$$

Noting that for any vectors $\vc u$ and $\vc v$
	$$\ip<\vc u^\top \vc \phi,\vc v^\top \vc \phi> = \vc u^\top \vc v,$$
we have for $j < k$
	\meeq{
		\ip< \psi_j^{(1)}, \psi_k^{(k)}> = \ip<\vc \phi(x_j)^\top \vc \phi, \vc f_k^\top \vc q_k> = \ip<\vc \phi(x_j)^\top \vc \phi, \vc f_k^\top Q_{k-1}^\top \cdots Q_1^\top \vc \phi> \ccr
			= \vc \phi(x_j)^\top Q_1 \cdots Q_{k-1} \vc f_k = \vc q_j(x_j)^\top Q_{k-1}^\top \cdots Q_1^\top  Q_1 \cdots Q_{k-1} \vc f_k \ccr
			= \vc f_j^\top Q_j \cdots Q_{k-1} \vc f_k = 0.
		}
Using this, we subsequently obtain
	\meeq{
		\ip<\psi_k^{(1)},\psi_k^{(k)}> = 		\ip<\psi_k^{(2)},\psi_k^{(k)}> = \cdots = \ip<\psi_k^{(k)},\psi_k^{(k)}>
	}
Thus we have an upper triangular representation:
	\meeq{
	A = \det \begin{mat} \langle \psi_1^{(1)},\psi_1^{(1)} \rangle \\
\vdots & \langle \psi_2^{(2)},\psi_2^{(2)} \rangle & \\
\vdots & \vdots & \ddots  \\
\vdots & \dots  & \hdots  & \langle \psi_n^{(n)}, \psi_n^{(n)} \rangle
\end{mat} \ccr
	= \prod_{k=1}^n \vc q_j(x_j)^\top \vc q_j(x_j).
}
In other words, 
	$$P_n(x_1,\dots,x_n) = {A \over n!} = p(x_1,\dots,x_n).$$

\section*{Acknowledgements}

We would like to thank Paul Cheung, Percy Deift and Govind Menon for discussions concerning this method.  We acknowledge the generous support of the National Science Foundation   through grant NSF-DMS-130318 (TT) and the Australian Research Council through the Discovery Early Career Research Award (SO).  Any opinions, findings, and conclusions or recommendations expressed in this material are those of the authors and do not necessarily reflect the views of the funding sources.

\bibliography{IESampling}

\end{document}